\newtheorem{remark}{Remark}
\newtheorem{proposition}{Proposition}
\begin{document}
\title{Long-Term Carbon-Efficient Planning for Geographically Shiftable Resources: A Monte Carlo Tree Search Approach}

\author{Xuan He,~\IEEEmembership{Graduate Student Member, IEEE}, Danny H.K. Tsang,~\IEEEmembership{Life Fellow, IEEE} and Yize  Chen,~\IEEEmembership{Member, IEEE} \vspace{-10pt}
\thanks{X. He and D.H.K. Tsang are with the Information Hub, Hong Kong University of Science and Technology (Guangzhou), email: xhe085@connect.hkust-gz.edu.cn, eetsang@ust.hk. Y. Chen is with the Department of Electrical and Computer Engineering, University of Alberta, email: yize.chen@ualberta.ca. 

This work was supported in part by Guangzhou Municipal Science and Technology Project under Grant 2023A03J0011, in part by Guangdong Provincial Key Laboratory of Integrated Communications, Sensing and Computation for Ubiquitous Internet of Things, and in part by the National Foreign Expert Project, Project Number G2022030026L.}
}


\markboth{
IEEE Transactions on Power Systems}%
{Shell \MakeLowercase{\textit{et al.}}: 
}


\maketitle

\begin{abstract} The global climate challenge is demanding urgent actions for decarbonization, while electric power systems take the major roles in the clean energy transition. Due to the existence of spatially and temporally dispersed renewable energy resources and the uneven distribution of carbon emission intensity throughout the grid, it is worth investigating future load planning and demand management to offset those generations with higher carbon emission rates. Such techniques include inter-region utilization of geographically shiftable resources and stochastic renewable energy. For instance, data centers are perceived as major carbon emission producers in the future due to increasing information load, while they hold untapped capability of geographical load balancing.
In this paper, we focus on locating and operating geographically shiftable resources, and propose a novel planning and operation model minimizing the system-level carbon emissions. This model decides the optimal locations for shiftable resource expansion along with the power dispatch schedule. To accommodate future system operation patterns and a wide range of operating conditions, we incorporate 20-year fine-grained load and renewables scenarios for grid simulations of realistic sizes (e.g., up to 1888 buses). To tackle the computational challenges coming from the combinatorial nature of such large-scale planning problems,  we develop a customized Monte Carlo Tree Search (MCTS) method, which can find reasonable solutions satisfying solution time limits. Besides, MCTS enables flexible time window settings and offline solution adjustments. Extensive simulations validate that our planning model can reduce more than 10\% carbon emission across all setups. Compared to off-the-shelf optimization solvers such as Gurobi, our method achieves up to 8.1X acceleration, while the solution gaps are less than 1.5\% in large-scale cases. Moreover, our method can find reasonable solutions within different time windows, unlike other benchmarked solvers that require considerable time or fail to produce feasible solutions.

\end{abstract}

\begin{IEEEkeywords}
Carbon emission, load shifting, mixed integer problem, Monte Carlo tree search.
\end{IEEEkeywords}

\section{Introduction}
\IEEEPARstart{G}lobal carbon emissions continue to increase, while limited time is left for meeting our global climate targets~\cite{liu2023monitoring}. As one of the biggest contributor to greenhouse gas emissions, electric power systems appeal for a carbon-free and sustainable transition. This motivates various ongoing and future projects to take carbon emission reduction as a top priority \cite{masanet2020recalibrating, Google,zheng2020mitigating}. For instance, in the CAISO grid region, by shifting the controllable demands of data centers, up to 62\% of the yearly curtailment can be absorbed and up to 239 KtCO2 emissions can be reduced \cite{zheng2020mitigating}. Indeed, various appliances, devices and electricity demand have shown the potential of helping reduce carbon emissions associated with electric power, such as data center demand~\cite{radovanovic2022carbon, liu2011greening}, electric vehicle (EV)~\cite{cheng2022carbon}, and HVAC systems~\cite{10.1145/3600100.3626346}.

Among the efforts on reducing power system carbon emissions, existing works investigate the operational strategies to reduce carbon emissions. Such schemes are practical, as grid emission intensity is not uniform across the network, while load such as data center are geographically shiftable. Optimization and data-driven techniques such as linear programming (LP)~\cite{gonzato2021long}, mixed integer programming (MIP)~\cite{alizadeh2015dynamic}, reinforcement learning~\cite{jiao2023energy} are widely adopted. In the operation stage, the intra-day schedule strategies for various energy participants such as electric vehicles~\cite{jiao2023energy} and energy storage \cite{sun2023effective, he2021utility} are investigates. In the planning stage, strategies are developed to address (a) power generation expansion plans~\cite{alizadeh2015dynamic, fan2020integrated, 6397576, garcia2022priority}; (b) resource sizing and allocation~\cite{cheng2018bi, akbari2019optimal, gonzato2021long}; (c) operation scheduling~\cite{radovanovic2022carbon, Google, home2019optimal}; and (d) interplay between electricity, cooling, and heating loop~\cite{jiao2023energy}. 
Recently, the economy-wide decarbonization targets further require coordinated investment planning \cite{home2019optimal,fan2020integrated,lindberg2021guide, wang2023carbon,abdennadher2022carbon}. A classic instance is the transmission expansion planning (TEP) problem\cite{6397576} associated with investment restrictions, network flow equations, network capacity limits, generator limits, and device models. The TEP problem minimizes both capital expenditures (CAPEX) for transmission line investments and year-long operational expenses (OPEX), which can be tackled by a two-stage decision procedure for (a). here-and-now variables such as sizing-related decisions; and (b). wait-and-see operational variables including renewable curtailment and load shedding.
Similarly, coordinated investment planning problems aiming at mitigating generation-induced anthropogenic greenhouse gas emissions can be divided into two stages, and we refer to  Table \ref{LS: Literature review} for related works involving a variety of  temporal- and spatial-flexible resources.

However, such planning problems for carbon emission reduction are prohibitively expensive to solve due to large network sizes, various integer variables, complex network and device-level constraints, and long planning horizons. These challenges encourage us to investigate how to design an efficient solution strategies to reduce long-term system-level carbon emissions, which allow the planner to inspect possible solutions and identify the most promising ones.


In this paper, we focus on load-side planning and management to achieve the grid-level carbon reduction goal. Shiftable loads such as data center demands and EV fleet charging offer advantages of flexible scheduling in both time and space~\cite{zheng2020mitigating, lindberg2022using}. For instance, servers or data centers can be deployed across diverse locations, each with distinct carbon emission patterns associated with generation types, allowing for utilization of spatial variations in carbon intensity through load shifting \cite{lindberg2021guide}. Though normally perceived as an operational problem to coordinate load shifts throughout different locations, effectiveness of geographical load shifting can be further improved by appropriate placement of shiftable load locations. Well-planed placement of such shiftable loads enable the access to low-carbon electricity while also ensuring compliance with transmission limits~\cite{wang2023carbon,abdennadher2022carbon}, and it will help reduce long-term carbon emissions of the whole grid.

\begin{table*}[t] 
\vspace{-0.4cm}
\hspace{0.1cm}
\centering
\caption{{summary for the planning literature}} \label{LS: Literature review}
\setlength{\tabcolsep}{0.6mm}{
\begin{threeparttable}
\begin{tabular}{ccccccccc} 
\hline 
 Ref. & Objective & Resources & Carbon & Horizon & RPS & $\text{Timestep}_{max}$ & $\text{Scale}_{max}$ & Method \\ \hline
\cite{gonzato2021long} & CAPEX+OPEX & RES, ES   & $\times$   &  1-year    &   $\surd$     &  8760     &  NEM      &  LP                \\ 
\cite{alizadeh2015dynamic} &  CAPEX+OPEX  &   Transmission lines, fuel units  & $\times$ &  20-year & $\surd$  &  20   &    30-bus        & MIP+Benders decomposition \\
\cite{6397576} &  CAPEX+OPEX  &   Transmission lines   & $\times$ &  NEM & $\times$  &  3   &    4392-line       & MIP+B\&B+Search space reduction \\
\cite{home2019optimal} & CAPEX+OPEX & RES, ES, Gas turbines      & $\surd$   &  2-year    &   $\surd$     &  48     &  69-bus      &  MIP+ solver               \\ 
\cite{fan2020integrated} & CAPEX+OPEX+Emission & RES, EV      & $\surd$   &  3-year    &   $\surd$     &  864     &  54-bus      &  MIP+ solver                \\ 
\cite{wang2023carbon} & CAPEX+OPEX + Emission & RES, Data center   & $\surd$   &  3-year    &   $\surd$     &  NEM    &  30-node      &  MIP + solver                \\
\cite{abdennadher2022carbon} & Emission & RES, Data center   & $\surd$   &  1-year    &   $\surd$     &  24    &  73-bus      &  MIBLP + solver                \\
\textbf{Ours} & Emission & RES, Data center   & $\surd$   &  1-year, 20-year    &   $\Box$     &  8760    &  1888-bus      &  MIP+MCTS                \\
\hline
\end{tabular}
\begin{tablenotes}
\footnotesize
\item 1.This paper considers the CAPEX and OPEX in the constraints of planning model.~~~2.RPS: Representative periods.\\3.$\Box$: Available for both cases with or without RPS.~~~4.NEM: Not explicitly mentioned.
\end{tablenotes}
\end{threeparttable}
}
\vspace{-0.5em}
\end{table*}



To this end, we design a load shifting framework centered on a planning model to determine the sitting and operation strategies of such flexible loads. Given that the decision variables for the locations are integers, such planning problem is formulated as a Mixed Integer Programming (MIP) model. Standard decomposition-based approaches or some relaxation-based approaches to avoid integer programming have been proposed to find solution for large-scale MIP formulations. Branch and Bound (B\&B) algorithm is the most commonly used approach that systematically explores candidate solutions in a state space search \cite{clausen1999branch}. It relies on efficient estimation of lower and upper bounds for search space regions/branches. Without such bounds, the algorithm resorts to exhaustive search. Commercial solvers such as Gurobi, MOSEK and GLPK\_MI use sophisticated heuristics to direct the search process.  Long-term planning (LTP) for load shifting locations is necessary to align the planning of geographically shiftable loads with future system operation patterns \cite{LTP}. Standard approaches may fail under this situation due to numerous variables or fine-grained simulation steps \cite{brenner2023learning}. 
Existing works often select typical daily profiles as representative periods (RPS) for simulations to avoid such issues, and may result in suboptimal investment decisions \cite{home2019optimal,fan2020integrated, gonzato2021long, garcia2022priority, he2021utility,abdennadher2022carbon}. Thus, rather than scaling down the LTP, prioritizing solution method efficiency is advisable. Recemt advancements have been made in developing learning-based methods \cite{xavier2021learning,9964136,nair2020solving} to reduce search space and guide the variables selection for B\&B method. However, these techniques involve various hyperparameters, while relationship between hyperparameters and actual performance is hard to characterize. 


We propose a novel solution technique based on Monte Carlo Tree Search (MCTS) to efficiently construct and explore the state space of shiftlable load's locations and values. MCTS has been used for efficient search of decisions with a long and sequential pattern~\cite{sun2022symbolic,abe2022anytime,Khalil_2022}. In the tree representation, each edge denotes a move to assign a specific value to a chosen decision variable while each node represents an integral solution state. MCTS can identify the best direction to explore the candidate solution on accumulated statistics, which typically vary with the visit count and the average reward of nodes. Such statistics balance unvisited search area exploration and promising move exploitation, finally focusing on more promising regions, thus sufficiently reducing the search space~\cite{wang2020learning}. In addition,  B\&B method relaxes integer constraints during branch checking and can be time-consuming to reach a feasible integral solution.  MCTS can consistently sample integral solutions randomly for simulations. Consequently, MCTS can not only obtain a close-to-optimal solutions given a sufficiently large time, but also an operator can stop MCTS anytime to obtain a reasonable solution for validation and simulation studies \cite{abe2022anytime}. This is crucial due to the limited time window for solving long-term planning problems, as planners can swiftly assess decisions involving various planning variable combinations. Moreover, MCTS's performance is less impacted by instance-specific hyperparameters, enhancing its generalization ability and eliminating the need for tedious fine-tuning. 
Proposed MCTS is utilized to harness the flexible representation of search space with customized computational tools to guide the search tree expansion, which enables the large-scale planning model tractable to solve.
The main contributions of our work can be summarized as follows:
\begin{itemize}
\item [1)]
  A general carbon-aware load shifting framework is designed for controllable demand, which improves utilization of generation with low carbon intensity. This is achieved by shifting the load instead of solely transferring generation \cite{home2019optimal, fan2020integrated} or overlooking line limits \cite{zheng2020mitigating}.
  \item [2)]
  An MIP problem is formulated to minimize carbon emission by considering the CAPEX and OPEX and limits. The locations of controllable demand are determined for various time horizon, which involves long-term planning and fine-grained resolutions.
  \item [3)]
  An MCTS-based approach is adapted to solve the large-scale MIP problems, which can find reasonable solutions for the cases where basic B\&B, GLPK\_MI, Mosek, and Gurobi fail, and allow flexible time window settings along with offline solution adjustments.
  \item [4)]
  Numerical experiments are conducted across on systems up to 1888-bus system with 7,300 timesteps. These experiments validate the effectiveness of the proposed framework and the MCTS-based approach.
\end{itemize}






\section{Planning Model for LS Location}
\subsection{Controllable Demand for Carbon Reduction}
In the proposed framework, we consider the power network setup, where the nodes can have fuel and renewable generation, normal loads, and geographically shiftable loads, i.e., controllable demand of EV charging and data centers. We assume that the carbon emission arises from fuel generation, the impact of renewable curtailment, and demand shifting. For different nodes, the carbon intensity can be different. Thus, shifting demand can potentially change the total carbon emission. An illustrative example is shown in Table \ref{LS: Description} and Fig. \ref{LS:frame}. There are 4 nodes $A, B, C,$ and $D$ selected to conduct the demand shifting. By mitigating the controllable demand from A and C to B and D, the total carbon emissions and curtailment can be reduced. 

\begin{table}[h] 
\hspace{0.1cm}
\centering
\caption{Description for the nodes in the illustrative sample.} \label{LS: Description}
\setlength{\tabcolsep}{1.5mm}{
\begin{tabular}{cccccc} 
\hline
  & Generation & Data Center & Emission & Curtailment & Load shifting \\ \hline
A & Fuel        &   $\times$           & High            &   $\times$          & \multirow{2}{*}{A $\Rightarrow$ B} \\
B & Fuel \& RES &    $\times$        & Low             & $\surd$  &                   \\ \hline
C & Fuel        &    $\times$        & High            &    $\times$         & \multirow{2}{*}{C $\Rightarrow$ B\&D} \\
D & Fuel \& RES &    $\surd$          & Low             & $\surd$  &                   \\ \hline
\end{tabular}
}
 \vspace{-0.9em}
\end{table}

\begin{figure}[h] 
    \hspace{0.85cm}
	\centering
	\includegraphics[width=0.92\linewidth]{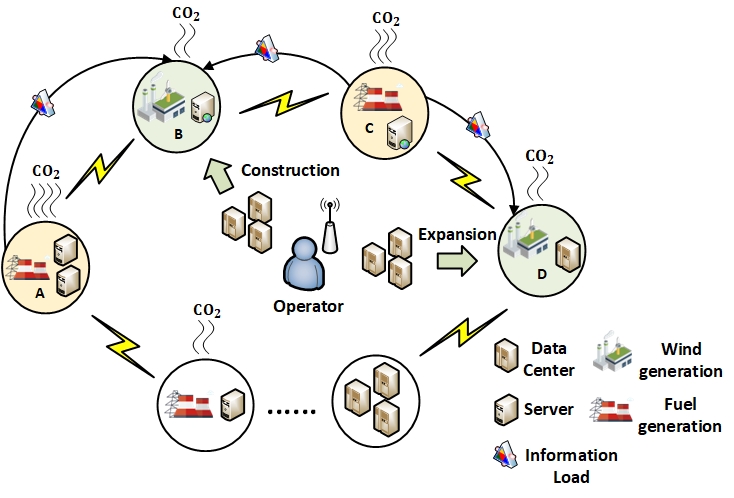}
   \vspace{-0.25cm}
	\caption{\footnotesize The schematic of our proposed spatial demand shifting framework. The shifting will be conducted among the selected locations to minimize the carbon emission of the whole system.}
	\label{LS:frame}
\end{figure}
Note that in this example, B and D need to absorb the extra loads, and intuitively the shiftable loads are supposed to be constructed in B and expanded in D under the limitation of the investment budget. More practical realization for 14-bus system is illustrated in Fig. \ref{LS:frame_14}, where 3 buses are selected to conduct load shifting. Generally, the upper-bound number of selected nodes can be set by the operator, and the optimal sitting and number of the selected nodes as well as the size of shifted controllable load will be determined by a long-term planning optimization problem.
\begin{figure}[t] 
    \hspace{0.25cm}
	\centering
	\includegraphics[width=0.8\linewidth]{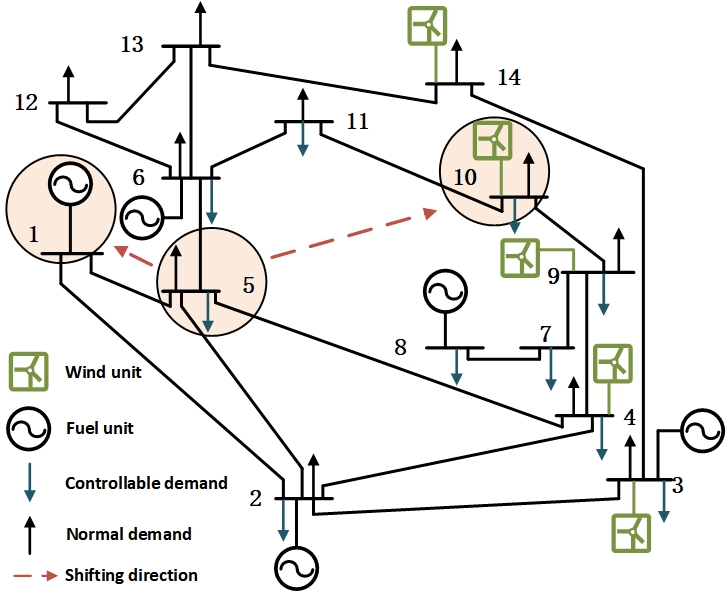}
	\caption{\footnotesize Illustration of IEEE 14-bus system and load shifting across regions. 
    Renewable generation such as wind generation, with a assumed carbon intensity of 0, is prioritized  for minimizing system-level carbon emissions. Bus 5 has the highest carbon intensity of generation consumption, while buses 1 and 10 have lower carbon intensities or surplus wind generation, and direct power provision is constrained by the line limits. Thus, the controllable load of bus 5 is shifted to buses 1 and 10 to optimize carbon emissions. }
	\label{LS:frame_14}
\end{figure}

\subsection{Objective function}
In this work, we focus on a planning optimization problem which aims at minimizing carbon emissions. It requires to consider emission sources across the network. For fuel generation,  we assume there is a quadratic relationship between the generation and carbon emission:
\begin{align}\label{LS: OBJ_FUEL}
C^{fuel}_{t}(\boldsymbol{p}^\text{gen}_{t}) = \sum^{g}_{i=1} [a_i(p^{\text{gen}}_{i,t})^2 + b_ip^{\text{gen}}_{i,t} ];
\end{align}
where $\boldsymbol{p}^\text{gen}_{t} \in \mathbb{R}^{g}$ denotes the generation at the buses. $g$ denotes the number of generators. $a_i$ and $b_i$ are the emission coefficients. Similarly, we describe the emission incurred by the demand variation as follows,
\begin{align}\label{LS: OBJ_load}
 C^{s}_{t}(\mathbf{\Delta}\boldsymbol{\ell}_{t}) = \sum^{n}_{i=1}[c_i({\Delta}{\ell}_{i,t})^2 + d_i{{\Delta}{\ell}_{i,t}}];
\end{align}
where $\mathbf{\Delta}\boldsymbol{\ell}_{t} \in \mathbb{R}^{n-1}$ denotes the shiftable load. $n$ denotes the number of buses. $c_i$ and $d_i$ are the emission coefficients.

The carbon emission from renewable curtailment, assumed to have a linear relationship, is considered as
\begin{align}\label{LS: OBJ_Cur}
C^{cur}_{t}(\boldsymbol{p}^\text{cur}_{t}) = \sum_{i=1}r^{q}_ip^{\text{cur}}_{i,t}
\end{align}
$\boldsymbol{p}^\text{cur}_{t} \in \mathbb{R}^{q}$ denotes the renewable curtailment. $r^{q}_i$ is the emission coefficient. 

Finally, the objective of the shiftable load planning model is formulated as
\vspace{-0.6em}
\label{LS: OBJ}
\begin{align}
\min_{\mathbf{z},\boldsymbol{p}^\text{gen},\mathbf{f},\boldsymbol{p}^\text{cur},\mathbf{\Delta}\boldsymbol{\ell}}\sum_{t=1}^{T} (C_{t}^{fuel}(\boldsymbol{p}^\text{gen}_{t}) + C_{t}^{cur}(\boldsymbol{p}^\text{cur}_{t}) + C_{t}^{s}(\mathbf{\Delta}\boldsymbol{\ell}_{t}))\Delta{t};
\end{align}
where $\mathbf{f}$ denotes the line flow described in the next subsection. $\mathbf{z} \in \mathbb{Z}^{n}$ denotes the decision variables for shiftable load location, where $z_i=1$ means such load is located at node $i$. $T$ is the total optimization horizon.
\subsection{System constraints}
In our planning problem formulation, the transmission network topology along with line parameters are included with linearized DC power flow models:

\begin{subequations}
\label{LS:Con_Sys}
\begin{align}
&\mathbf{A}\boldsymbol{p}^\text{gen}_{t} + \mathbf{B}\boldsymbol{p}^\text{res}_{t} + \mathbf{C}\mathbf{f}_{t} - \mathbf{B}\boldsymbol{p}^\text{cur}_{t} = \boldsymbol{\ell}_{t} + \mathbf{D}\boldsymbol{s}_{t} + \mathbf{M}\mathbf{\Delta}\boldsymbol{\ell}_{t}; \label{LS:Con_balance}\\
-&\overline{\mathbf{f}} \leq \mathbf{Kf}_t \leq \overline{\mathbf{f}};
\label{LS:Con_line} \\ &\mathbf{0}\leq \boldsymbol{p}^\text{gen}_{t} \leq \overline{\boldsymbol{p}}^\text{gen}; \label{LS:Con_gen}
\end{align}
\end{subequations}
where $\mathbf{f}_t \in \mathbb{R}^{n-1}$ denotes the fundamental flow. $\boldsymbol{p}^\text{res}_t \in \mathbb{R}^{n}$ denotes the renewable generation. $\boldsymbol{\ell}_t \in \mathbb{R}^{n}$ denotes the normal load. $\boldsymbol{s}_t \in \mathbb{R}^{n}$ denotes the original shiftable load. $\mathbf{A} \in \mathbb{R}^{n \times q}$ denotes the mapping from the fuel generations to the buses. $\mathbf{B} \in \mathbb{R}^{n \times g}$ denotes the mapping from the renewable generations to the buses. $\mathbf{C} \in \mathbb{R}^{n \times n-1}$ denotes the mapping from the fundamental flows to the buses.  $\mathbf{D} \in \mathbb{R}^{n \times d}$ denotes the mapping from the controllable loads to the buses. $\mathbf{K} \in \mathbb{R}^{m \times n-1}$ denotes the mapping from the fundamental flow to all flows. We use $\mathbf{M} \in \mathbb{R}^{n \times n-1}$ to represent the mapping from the fundamental shiftable load to all shiftable loads, which absorbs the shifting balance constraint that the summation of geographically shifted demand required to be 0. For instance, in a 3-node system, one available cases for $\mathbf{M}$ can be $[1,0;-1,-1;0,1]$. \eqref{LS:Con_balance} denotes the nodal power balance constraints. \eqref{LS:Con_line} denotes the line flow limitation. \eqref{LS:Con_line} denotes the fuel generation limitation. 

  \renewcommand{\algorithmicrequire}{\textbf{Input:}}
  \renewcommand{\algorithmicensure}{\textbf{Output:}}
  \begin{algorithm}[t]
    \caption{MCTS-based Shiftable Load Location}
    \begin{algorithmic}[1] \label{Algorithm: MCTS}
      \REQUIRE Planning model \eqref{LS: MODEL}, eploration rate $\rho$, round budget $K_{max}$, $\mathcal{T}_{all}=$ IPT.
      \ENSURE Location decision $\boldsymbol{z}^*$, decision tree $\mathcal{T}_{MCTS}$.
      \renewcommand{\algorithmicensure}{\textbf{Initialize:}}
      \ENSURE Best goal $\textit{L}^*=$ Inf, round $k = 1$, $N_r=0, V_r=0$.
      \WHILE {$\boldsymbol{z}^{(*)}$ doesn't converge $\mathbf{and}$ $k<K_{max}$}
      \WHILE{current node $i$ is not an expandable node ($N_i \leq 1$) or a leaf node}
        \STATE \textit{\textbf{Selection}:} go to next node $j$ that has maximal UCB among the children of node $i$.
      \ENDWHILE
      \STATE \textit{\textbf{Expansion:}}
      \\ If $N_i==0$: Go to step 6.
      \\ If $N_i==1$:
      \\a. Add all children of node $i$ to $\mathcal{T}_{MCTS}$.
      \\b. Let each child's $N_j=0, V_j=0$.
      \\c. Choose a child randomly to visit as current node.
      \STATE \textit{\textbf{Simulation:}}\\
      a. Start from the current node $i$ and randomly choose next action until reaching the terminal state $\boldsymbol{z}_k$.\\
      b. $\boldsymbol{z}_k \rightarrow$ \eqref{LS: MODEL} $\Rightarrow$ \eqref{LS: MODEL_LP}.\\
      c. Solve \eqref{LS: MODEL_LP} for each $t$ in parallel and get $QP^{*}_{t}$.\\
      d. Take $\sum{QP^{*}_{t}}$ as the reward $\gamma_k$.\\
      e. If $\gamma_k < \textit{L}^*$: $\boldsymbol{z}^*\leftarrow \boldsymbol{z}_k$.
      \STATE \textit{\textbf{Backpropagation:}} \\
      Start from the reached node, for each node $i$ have been visited in this round:
      $V_i \leftarrow V_i + \gamma_k, N_i \leftarrow N_i +1$.
\STATE $k \leftarrow k+1$.
      \ENDWHILE
      \STATE Return $\boldsymbol{z}^*, \mathcal{T}_{MCTS}$.
    \end{algorithmic}
  \end{algorithm}
\vspace{-1em}

\subsection{Load shifting constraints}
The shiftable load involves the curtailment of renewable generation and data center demand, which are bounded by \eqref{Con_curtail} and \eqref{Con_ls} respectively.
\begin{subequations}
\label{LS:Con_LS}
\begin{align}
&\mathbf{0}\leq \boldsymbol{p}^\text{cur}_t \leq \overline{\boldsymbol{p}}^\text{cur}; \label{Con_curtail} \\
&-diag\{\mathbf{z}\}\underline{\mathbf{\boldsymbol{\ell}}}_t\leq \mathbf{M}\mathbf{\Delta}\boldsymbol{\ell}_t \leq diag\{\mathbf{z}\} \overline{\mathbf{\boldsymbol{\ell}}}_t;
\label{Con_ls}
\end{align}
\end{subequations}
where \eqref{Con_curtail} and \eqref{Con_ls} limit the amount of renewable curtailment and shifted controllable demand, respectively. In \eqref{Con_ls}, the left-hand means the $\mathbf{\Delta}\boldsymbol{\ell}_t$ can be negative, i.e., shifted to other nodes, while the right-hand limits the amount of received controllable demand. Here we set the lower bound $\underline{\mathbf{\boldsymbol{\ell}}}_t = D\boldsymbol{s}_t$ and the upper bound $\overline{\mathbf{\boldsymbol{\ell}}}_t$ is determined by the basic and expanded size of the shiftable load. Note that only the controllable demand is limited by the integer variables $\mathbf{z}$, which requires the planning decisions to set the locations for such shiftable load. 

\subsection{Economic constraints} 

For our planning problem, we consider the following economic limits in the problem formulation:
\begin{subequations}\label{LS:Con_Eco}

\begin{align}
&C_{gen,t}(\boldsymbol{p}^\text{gen}_{t}) =\mathbf{w}^T\boldsymbol{p}^\text{gen}_t\leq \overline{C}_{gen,t}; \label{LS:Con_Eco_cost}
\\ &\sum_{i=1}^{n}z_{i} \leq K; \label{LS:Con_Eco_K}
\\ &\mathbf{\alpha}^{T}\mathbf{z} \leq B.\label{LS:Con_Eco_B}
\end{align}
\end{subequations}

In the above formulation, $\overline{C}_{gen, t}$ is the cap of fuel generation cost. $\mathbf{w}$ is the fuel generation cost coefficient. $K$ denotes the maximum number of selected locations. $\mathbf{\alpha}$ is the cost coefficient  associated with each locational choice of future shiftable load,  and $B$ is the maximum investment budget.

\subsection{Long-term planning optimization
model}
Finally, the planning model for LS can be summarized as follows,
\begin{subequations} \label{LS: MODEL}
\vspace{-0.6em}
\begin{align}
\min_{\mathbf{z},\boldsymbol{p}^\text{gen},\mathbf{f},\boldsymbol{p}^\text{cur},\mathbf{\Delta}\boldsymbol{\ell}} & \sum_{t=1}^{T} (C_{t}^{fuel}(\boldsymbol{p}^\text{gen}_{t}) + C_{t}^{cur}(\boldsymbol{p}^\text{cur}_{t}) + C_{t}^{s}(\mathbf{\Delta}\boldsymbol{\ell}_{t}))\Delta{t}\\
     s.t. \quad & \eqref{LS:Con_Sys} -\eqref{LS:Con_Eco}
\end{align}
\end{subequations}
where \eqref{LS: MODEL} model can be NP-hard due to its mixed
integer programming (MIP) nature, especially when applied to large-scale systems with a fine-grained time horizon, such as day-to-day granularity. This motivates us to design customized and more efficient solution scheme to reduce system emissions.
\vspace{-0em}

\section{MCTS Solution to Planning model}
\begin{figure*}[th] 
    \vspace{-0.15cm}
    \hspace{-0.cm}
	\centering
	\includegraphics[width=0.9\linewidth]{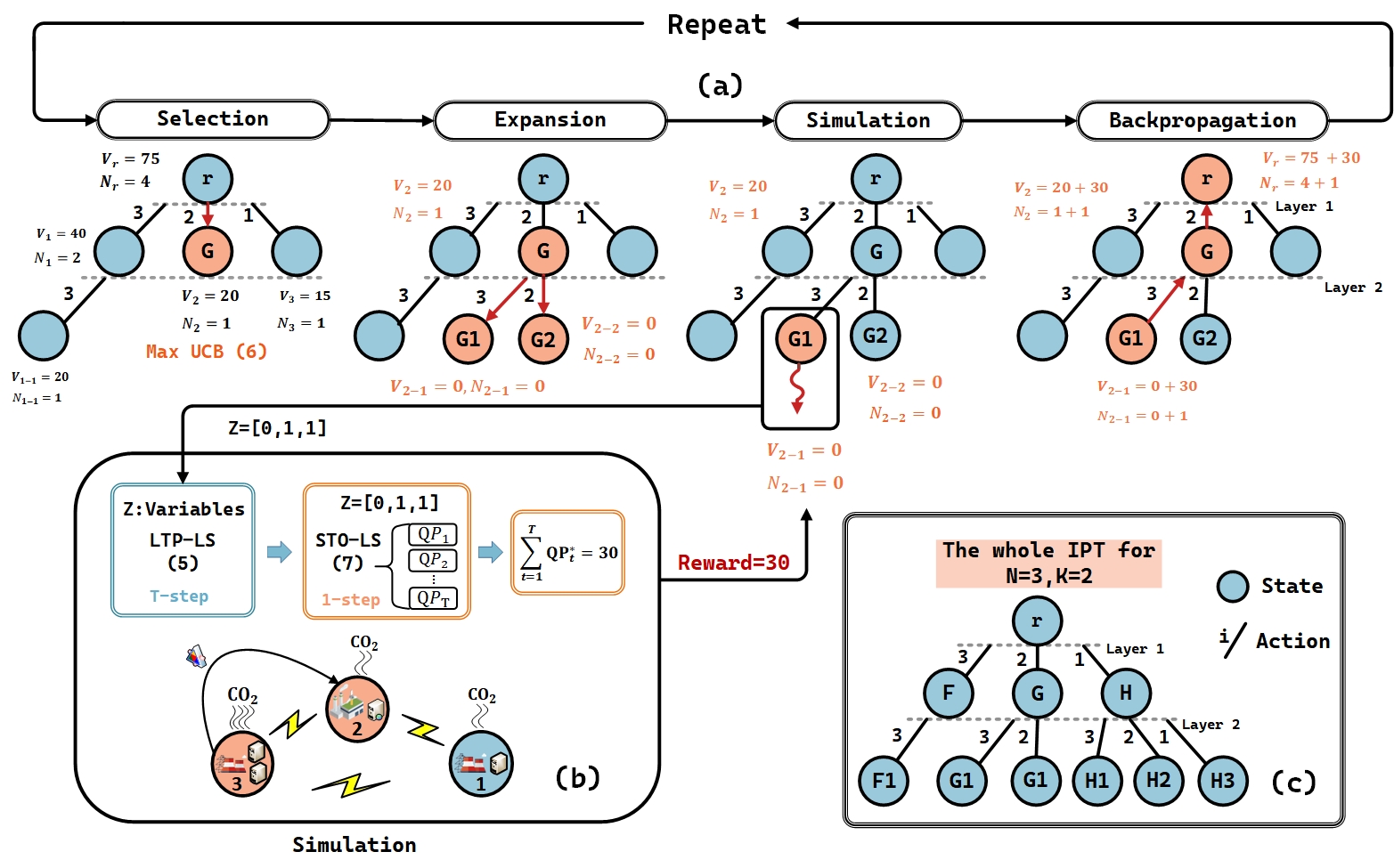}
	\caption{\footnotesize Illustrative example of MCTS-based location for a 3-bus system. We demonstrate the process of the 5th search round. (a) explains the 4 MCTS phase of one learning round. (b) show the specific sampling and reward calculation process in the simulation stage. (c) gives the tree structure of the IPT. Based on the given assumptions for $N_i$ and $V_i$, node G is determined to have the highest UCB value, and therefore, it is selected as the next node to visit. In this round, node G is visited but not expanded, resulting in the addition of its two child nodes, G1 and G2, to $\mathcal{T}_{MCTS}$. These child nodes are initialized with '$N_{G1}=0, V_{G1}=0, N_{G2}=0, V_{G2}=0$'. The action 'located at bus 1' is randomly chosen, leading to the selection of node G1 for the simulation stage. However, since there are no available actions at node G1, the simulation terminates at the state where the location decision is $\boldsymbol{z}$ = [1,1,0]. This indicates that bus 1 and bus 2 are selected for load shifting. \eqref{LS: MODEL_LP} can be solved in parallel to reduce the solution time. The reward for node G1 in this round is assumed as $\gamma_k = 30$, which is set as the sum of optimal objectives of \eqref{LS: MODEL_LP}. The UCB values for $\mathcal{T}_{MCTS}$ are updated through backpropagation: $N_{G1}=1, V_{G1}=30, N_{G}=2, V_{G}=50, N_{r}=105, V_{r}=5$. UCB values for other nodes remain the same. Given a sufficient time, $\mathcal{T}_{MCTS}$ can involve the most promising region of $\mathcal{T}_{IPT}$ and there will be a converged path start from the root to a leaf node, which can get a close-to-optimal objective.}  
	\label{LS:MCTS}
\end{figure*}


Traditional optimization algorithms address computation issues in solving \eqref{LS: MODEL} by aggregating nodes or using representative periods, leading to suboptimal decisions. To overcome these limitations, we propose a computationally efficient solution that avoids information reduction in this section. Especially, we utilize the iterative priority tree (IPT) and the Upper Confidence Tree (UCT)-based MCTS algorithm to determine the binary variables $\mathbf{z}$. 

\subsection{Tree Representation of Location Space}

We denote the set of possible solutions for $\mathbf{z}$ as $\Theta$. Then, we consider and denote a tree $\mathcal{T}_{all}$ where each node can represent a subset $\Theta_\mathbf{z} \in \Theta$, including several $\mathbf{z}$ patterns whose determined locations are partly same, and each leaf is a $\mathbf{z}$ pattern satisfying the limits \eqref{LS:Con_Eco_K} and \eqref{LS:Con_Eco_B}. Considering the following metrics \cite{abe2022anytime}:
\begin{enumerate}
\item Faithfulness: The inclusion relationship of the original location vector space remain unchanged: if node $j$ is a descendant of node $i$, then the locations selected in $\mathbf{z}(i)$ will be included in $\mathbf{z}(j)$, i.e., all the features of the vector $\mathbf{z}(j)-\mathbf{z}(i)$ are non-negative.
\item Non-redundancy: The mapping from a tree node into an location vector $\mathbf{z}$ is one-to-one. For any location vector $\mathbf{z} \in \Theta$ enabling \eqref{LS:Con_Eco_K} and \eqref{LS:Con_Eco_B} hold, there exists only one leaf of the constructed tree;
\end{enumerate}
We choose the IPT as illustrated in Fig. \ref{LS:MCTS} to represent the location space to ensure the correctness and improve the efficiency of the tree search. In the IPT, each node represents a state of current $\mathbf{z}$ and each edge corresponds to an action allocating a shiftable load location to one of the buses. The IPT selectively assigns a bus only if its priority is higher than that of the most prioritized bus which has already been assigned a shiftable load, where the priority can be predescribed by the operator. The proof of the faithfulness and non-redundancy of IPT is given in \cite{abe2022anytime}.

\subsection{Monte Carlo Tree Search}
In this subsection, we describe how to search iteratively in the constructed tree to find the solutions for the large-scale planning problem. To get an appropriate $\mathbf{z} \in \Theta$ from the constructed IPT, we employ the MCTS algorithm. This method effectively searches for optimal decisions within extensive search trees, such as the IPT representing the location space of large-scale systems. In this paper, we use the UCT to find a subtree $\mathcal{T}_{MCTS}$ from $\mathcal{T}_{all}$. This search involves a sequential process of tree traversal and we use $\mathcal{T}(k)$ to represent the subtree developed in traversal round $k$. Each round consists of the following four steps:

\textbf{(1)Selection.} From the root node, UCT traverses the nodes of current search tree $\mathcal{T}(k)$ and determines the next node based on a predefined selection policy, continuing this process until it reaches either an expandable node (go to expansion stage) or a leaf node (go to simulation stage). In this paper, we select the next node $j$ who is a child of the current node $i$ considering the value $v_j$ and the visited frequency of Node $j$. This can be represented as the Upper Confidence Bounds (UCB),
\begin{subequations} \label{UCB}
\begin{align}
UCB(j)& := \hat{v}_j + \rho\sqrt{\frac{log(N_i)}{N_{j}}}; \\
\hat{v}_j & := \frac{V_j}{N_{j}};
\end{align}    
\end{subequations}
where $\hat{v}_j$ is a estimator of $v_j$. $V_j$ is the sum of rewards and the reward will be calculated in the simulation step. $N_i$ denotes the number times node $i$ visited, and $N_{j}$ is the number of times node $j$ has been selected at node $i$. $\hat{v}_j$ encourages the exploitation of current best child node, while $\sqrt{\frac{log(N_i)}{N_{j}}}$  encourages the exploration of less-visited child nodes. $\rho \geq 0$ is a empirically predefined parameter, controlling the balance between exploring less visited nodes and exploiting nodes wit higher expected rewards. Node $j$ is selected because it has the maximal UCB among all the children of node $i$ and for a expandable node $i$, $N_i \leq 1$.  

\textbf{(2)Expansion.} When reaching an expandable node with $N_i = 0$, go to simulation stage directly. While reaching an visited node with $N_i = 1$, all its children are added to the tree with $N_j = 0$. The children are be determined according to \eqref{LS:Con_Eco_K} and \eqref{LS:Con_Eco_B}. Then, one of the children is randomly visited to conduct simulation.

\textbf{(3)Simulation.} For the reached node $i$, if it is non-terminal, one simulation starting from the current node until reaching the terminal state (e.g. [0,1,0] can be a terminal state for node G2 in Fig. \ref{LS:MCTS}) is conducted. The actions in this process are selected randomly. Once the terminal state is determined, the reward for the sampled vector $z_k$ can be calculated by input it to \eqref{LS: MODEL}.

\begin{proposition}
During the simulation stage, solving model \eqref{LS: MODEL} can be substituted with solving multiple smaller-scale submodels in parallel. In contrast, the basic B\&B method necessitates solving a relaxed model that closely matches the scale of \eqref{LS: MODEL} at each round.
\end{proposition}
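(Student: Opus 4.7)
The plan is to prove the proposition by a direct block-separability argument applied to model \eqref{LS: MODEL}, exploiting the fact that once the integer location vector $\mathbf{z}$ is fixed, every remaining coupling in time collapses. First, I would inspect each constraint group in turn. The objective is already written as $\sum_{t=1}^{T}$ of quantities that depend exclusively on the time-$t$ decision variables $\boldsymbol{p}^{\text{gen}}_t$, $\boldsymbol{p}^{\text{cur}}_t$, and $\mathbf{\Delta}\boldsymbol{\ell}_t$. The system constraints \eqref{LS:Con_Sys}, the load-shifting constraints \eqref{LS:Con_LS}, and the per-period cost cap \eqref{LS:Con_Eco_cost} all carry a single time index and involve no variable at a different time. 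The only constraints that couple decisions across time are the cardinality bound \eqref{LS:Con_Eco_K} and the investment budget \eqref{LS:Con_Eco_B}, but these involve solely the time-invariant vector $\mathbf{z}$.

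Second, I would make precise what happens when, at the end of the MCTS simulation phase, $\mathbf{z}$ attains a terminal value $\mathbf{z}_k$. Substituting $\mathbf{z}=\mathbf{z}_k$, the constraints \eqref{LS:Con_Eco_K} and \eqref{LS:Con_Eco_B} reduce to numerical feasibility tests that either hold or fail for $\mathbf{z}_k$ alone and can be checked once and removed from the continuous subproblem. Moreover, \eqref{Con_ls} reduces to a per-$t$ linear constraint whose coefficients $\mathrm{diag}\{\mathbf{z}_k\}\underline{\boldsymbol{\ell}}_t$ and $\mathrm{diag}\{\mathbf{z}_k\}\overline{\boldsymbol{\ell}}_t$ depend only on data and on $t$. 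Consequently the feasible set factorizes as a Cartesian product over $t$ and the objective as a sum over $t$, so that
\begin{equation*}
\min_{\boldsymbol{p}^{\text{gen}},\mathbf{f},\boldsymbol{p}^{\text{cur}},\mathbf{\Delta}\boldsymbol{\ell}} \sum_{t=1}^{T} F_t \;=\; \sum_{t=1}^{T}\; \min_{\boldsymbol{p}^{\text{gen}}_t,\mathbf{f}_t,\boldsymbol{p}^{\text{cur}}_t,\mathbf{\Delta}\boldsymbol{\ell}_t} F_t,
\end{equation*}
where each inner minimization is a convex QP of dimension $O(n+g+q+m)$, independent of $T$. These $T$ subproblems can therefore be dispatched to parallel workers with no inter-process communication, and their optimal values summed to obtain the reward $\gamma_k$ consumed in Algorithm~\ref{Algorithm: MCTS}.

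Third, I would contrast this with the basic B\&B scheme. At a generic branching node, only part of $\mathbf{z}$ is fixed while the remaining components are treated as continuous variables in $[0,1]$; through \eqref{Con_ls}, each such unfixed component couples every time slice simultaneously, so the LP relaxation does not factorize across $t$. Even at leaf nodes where $\mathbf{z}$ is fully integer, a standard solver still attacks one monolithic LP rather than decomposing automatically. Hence each node-LP retains the full $O(Tn)$ scale of \eqref{LS: MODEL}, which is the second half of the claim.

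The main obstacle is expository rather than mathematical: I must enumerate every constraint of \eqref{LS:Con_Sys}--\eqref{LS:Con_Eco} and certify that, once $\mathbf{z}$ is fixed, it either decomposes across $t$ or depends on $\mathbf{z}$ alone, so that no inter-period coupling is overlooked. After this bookkeeping, both the parallel decomposition of the simulation reward and the unchanged scale of the B\&B relaxation follow immediately.
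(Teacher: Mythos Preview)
Your proposal is correct and follows essentially the same argument as the paper: once the sampled vector $\mathbf{z}_k$ is fixed, the only inter-temporal link in \eqref{LS: MODEL}---namely the shared $\mathbf{z}$ appearing in \eqref{Con_ls}, \eqref{LS:Con_Eco_K}, and \eqref{LS:Con_Eco_B}---disappears, so the problem separates into $T$ independent QPs, whereas in B\&B the (partially) relaxed $\mathbf{z}$ remains a decision variable common to every time step and prevents decoupling. Your write-up is in fact more explicit than the paper's, since you enumerate each constraint group and verify separability case by case.
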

\begin{proof}
Denote the model \eqref{LS: MODEL} with the sampled vector $z_k$ as $M_T$, then the feasible region of $M_T$ related to \eqref{Con_ls} can be fixed to the case of $z_k$. 

Consequently, for each single time step in $M_T$, we can formulate a submodel $M_t$ only involving the decision variables, constraints and objective function related to time step $t$, i.e., the feasible region and objective function of $M_t$ are independent of other time steps. Then solving $M_T$ will be equivalent to solve $M_t$ for $t=1,..,T$ which can be conducted in parallel. For the basic B\&B method, since $z$ at each round is still to be determined, the integer variables in the model \eqref{LS: MODEL} can be relaxed but the derived model can not be decoupled.
\end{proof}


Thus, here model \eqref{LS: MODEL} can be decoupled into $T$ independent and simple quadratic programming, which can be solved in parallel. While for methods like B\&B, it does not hold such properties for our planning problem, and is still computationally expensive with the heuristics. For each time step $t$, the quadratic programming (QP) can be formulated as:
\begin{subequations} \label{LS: MODEL_LP}
\begin{align}
QP_{t}^{*}:=\min_{\boldsymbol{p}^\text{gen}_t,\mathbf{f}_t,\boldsymbol{p}^\text{cur}_t,\mathbf{\Delta}\boldsymbol{\ell}_t} &  C_{t}^{fuel}(\boldsymbol{p}^\text{gen}_{t}) + C_{t}^{cur}(\boldsymbol{p}^\text{cur}_{t}) + C_{t}^{s}(\mathbf{\Delta}\boldsymbol{\ell}_{t})\\
& s.t. \quad \eqref{LS:Con_Sys} -\eqref{LS:Con_Eco_cost}\\
&~~~~~\quad\mathbf{z}_t = \mathbf{z}_{input}
\end{align}
\end{subequations}
The reward, denoted as $\gamma_k$ is defined as the negative sum of the objective values of the above quadratic programmings. While load shifting can enhance transmission network feasibility for more load patterns, \eqref{LS: MODEL_LP} may not always be feasible. For the infeasible case, we remove the corresponding terminate state, i.e., the leaf node in the $\mathcal{T}_{all}$ and sample another solution. If all the children of a node are removed, the node itself is also eliminated. In case the current reached node $i$ is removed, we return to the selection stage with the updated $\mathcal{T}_{all}$.

\textbf{(4)Backpropagation.} Start from the reached node, for each node $i$ have been visited in this round, we do this backpropagation: $V_i \leftarrow V_i + \gamma_k, N_i \leftarrow N_i +1$. As a result, the statistics of all nodes are updated, serving as the starting point for the next round.

The iteration of these four steps will stop either when the visiting path has converged or when the computation budget has been exhausted, and the solution's availability can be guaranteed considering the analysis below.

\begin{remark} 
Algorithm \ref{Algorithm: MCTS} is capable of producing feasible and reasonable solutions when stopping early.
\end{remark}
This is attributed to the sampling approach and the best-first search property of the MCTS, which collectively assure a solution's availability at any time and facilitate continuous result refinement. This ensures adaptability of the proposed Algorithm \ref{Algorithm: MCTS} to varying computational budgets during runtime, with solutions improving over time and converging to the optimum given sufficient time and memory.



\section{Case Study}
In this section, we evaluate the performance of our proposed long-term planning scheme and MCTS-based shiftable load location using real and synthetic datasets. we demonstrate the proposed scheme is carbon-effective while the solution procedure is efficient and scalable compared to Basic B\&B method and practical solvers (Gurobi, Mosek and GLPK\_MI). Specifically, our scheme achieves more than a 10\% reduction in emissions across all settings. MCTS-based  method provides an acceleration of up to 8.1X in cases with large time steps that are solvable for all methods, as long as the solution gaps are less than 1.5\%. Furthermore, our method is capable of finding a reasonable solution in cases where other methods fail.

\begin{table}[b]
\hspace{0.1cm}
\centering
\caption{Description for the nodes in the illustrative sample.} \label{LS: Description}
\vspace{-0.5em}
\setlength{\tabcolsep}{0.4mm}{
\begin{tabular}{cccccc}
\hline
         & No.Lines($m$) & No.Gens($q$) & No.RES($g$) & No. Load ($d$) & No. Integers\\ \hline
14-bus   & 20       & 5       & 5      & 10    & 14 \\
39-bus   & 46       & 10      & 5      & 30    & 39 \\
118-bus  & 186      & 54      & 25     & 80    & 118\\
1888-bus & 2308     & 285     & 500    & 1500  & 1888\\ \hline
\end{tabular}
}
\end{table}

\subsection{Simulation Setups}
For our analysis, we use IEEE 14-bus, IEEE 39-bus, IEEE 118-bus and IEEE-1888rte systems and consider the optimization horizons up to twenty years. The specific configurations of the investigated systems and the number of integer variables in each setting are given in Table \ref{LS: Description}. The typical daily profiles in 2000-2019 of wind generation $\boldsymbol{p}^{res}_t$, normal load $\boldsymbol{l}_t$ \cite{RTE} and controllable load $\boldsymbol{s}_t$ \cite{DCLoad} are shown in Fig. \ref{LS:Average_level} and Fig. \ref{LS:Average_level_hourly}. In Figure \ref{LS:Average_level}, we have a time horizon of twenty years, where each data point represents the peak level within a day. We utilize these profiles to conduct simulations with a daily resolution, using time steps of 30, 365, and 7300 days. For the one-year time horizon, we consider the dataset of 2011. In Figure \ref{LS:Average_level_hourly}, we randomly select the profiles of two adjacent days in 2011. Based on these profiles, we conduct simulations with an hourly resolution, using time steps of 720 and 8760 hours, respectively. The dataset of wind generation and normal load is in hourly intervals and is published by RTE, which is the France's Transmission System Operator. 

Due to the missing wind generation data from 2000-2010 in \cite{RTE}, we utilize existing data on wind power generation distribution to fill the gap. In \cite{DCLoad}, the hourly controllable load for a year is derived from real hourly data center demand observed over a two-week period. In our simulations, this data is utilized directly for a 1-year horizon and further projected by normal distribution to a twenty-year horizon with assuming a 15\% annual increasing rate. Then we allocate the normal loads to each node proportionally according to the standard cases in Matpower, while the controllable loads and the wind generations are assigned to a portion of buses with scaling the typical profiles. 


The parameter settings of MCTS based method and Gurobi optimizer for different cases can be found in Table 3-5. All experiments have been implemented using Python 3.8 in a Laptop with an Intel Core i5 processor at 2.5 GHz with 16GB RAM.  Gurobi is used as the QP problem solvers.

\begin{figure}[t] 
    \hspace{-0.3cm}
	\centering
	\includegraphics[width=0.95\linewidth]{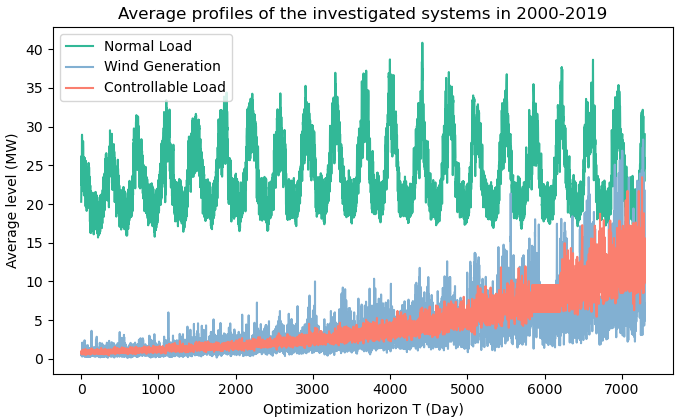}
	\caption{\footnotesize The average levels of the normal load, wind generation, and controllable load for per node in the investigated systems. The values represent the peak values within a day and the profiles are scaled to different levels according to the scales of systems.}
	\label{LS:Average_level}
 \vspace{-0.9em}
\end{figure}

\begin{figure}[t] 
    \hspace{-0.3cm}
	\centering
	\includegraphics[width=0.95\linewidth]{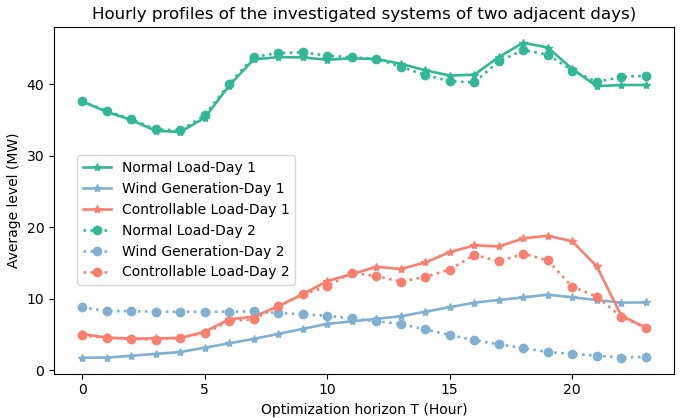}
	\caption{\footnotesize The hourly profiles within a day.}
	\label{LS:Average_level_hourly}
 \vspace{-0.9em}
\end{figure}




\subsection{Evaluation of Carbon Emission Reduction}
To evaluate the effect of the proposed framework in decarbonization, we compare the carbon emission with the DC-OPF model defined as follows:

\begin{subequations}
\vspace{-0.2em}
\label{LS:Metric_total}
\begin{align}
\min_{\boldsymbol{p}^\text{gen},\mathbf{f}}\quad & \sum_{t=1}^{T} C_{gen,t}(\boldsymbol{p}^\text{gen}_{t})\\
\text { s.t. } \quad & \eqref{LS:Con_balance}-\eqref{LS:Con_gen}
\end{align}
\end{subequations}
In the simulations, no shiftable loads are allowed. We adjust the cap of fuel generation cost ($\overline{C}{gen,t}$) to 1.05 times the optimal solution ($C{gen,t}(\boldsymbol{p}^{gen,*}_{t})$) in \eqref{LS:Con_Eco_cost}. The operator can modify this cap according to requirements.

\begin{table}[b]
\vspace{-0em}
\hspace{0.1cm}
\centering
\caption{Carbon emission in Long-term Planning.} \label{LS: LTP_Metric}
\vspace{-0.5em}
\setlength{\tabcolsep}{1.9mm}{
\begin{tabular}{cccclc}
\hline
         & $\mu_{incr}$ & $C_{OPF}$ & $C_{LS}$ & $\mu_{redu}$ & Round \\ \hline
14   & 5\%       & 10828789.62               & 8827543.46                   &   18\%         & 70     \\
39   & 5\%       & 12341884.52     &   10313847.81
      &      17\%      & 165       \\
118  & 5\%      & 87253403.27    &    75316296.72           &     14\%       &   30 \\
1888 & 5\%    & 1371929726.28   & 1186488872.36       &   13\%       & 10    \\ \hline
\end{tabular}
}
\end{table}

We firstly consider the hourly resolution cases with $T = 720$ and $T = 8760$ for 14-bus and 39-bus system, and the results of carbon mission derived by the DC-OPF model and the LS-model are shown in Fig. \ref{LS:LS_emission_hourly}. It can be seen that both systems can reduce more than 10\% carbon emission. 
\begin{figure}[b] 
   \hspace{0.1cm}
	\centering
	\includegraphics[width=0.98\linewidth]{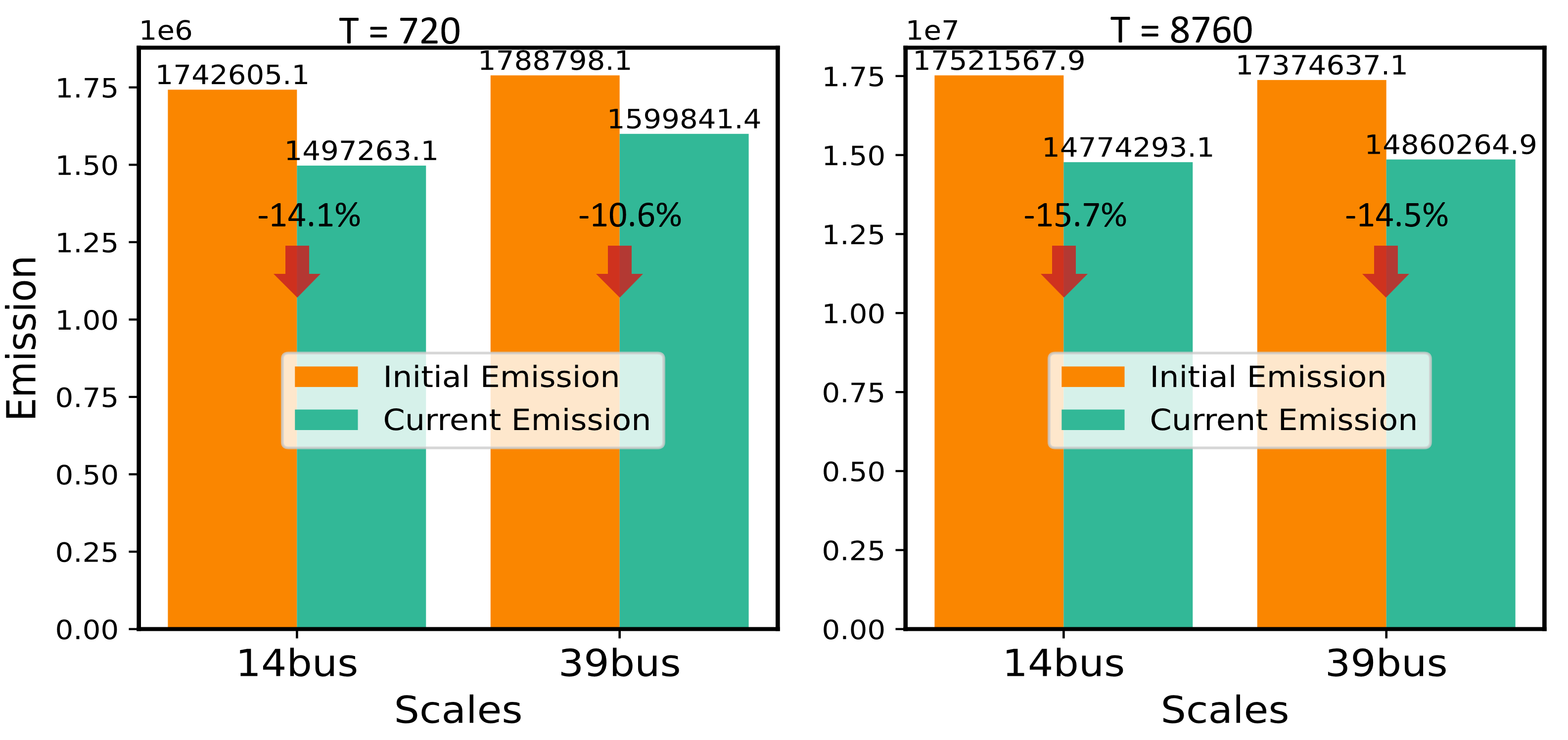}
   \vspace{-0.35cm}
	\caption{\footnotesize Carbon emission of 14-bus and 39-bus systems with the hourly resolution.}
	\label{LS:LS_emission_hourly}
\end{figure}

\begin{table*}[h]
\centering
\caption{Description for the nodes in the illustrative sample.} \label{LS: Metric_per}
\setlength{\tabcolsep}{2.4mm}{
\begin{tabular}{c|cccc|cccc|cccc}
\hline
\multirow{2}{*}{} & \multicolumn{4}{c|}{Metric($0.25 \Delta\overline{l}$)}                                                       & \multicolumn{4}{c|}{Metric($ \Delta\overline{l}$)}                                                       & \multicolumn{4}{c}{Metric($1.75 \Delta\overline{l}$)}                                                        \\ \cline{2-13} 
                  & $\mu_{redu}$ & $\mu_{allow}$ & $\mu_{shift}$ & $\Delta\hat{\boldsymbol{l}}$ & $\mu_{redu}$ & $\mu_{allow}$ & $\mu_{shift}$ & $\Delta\hat{\boldsymbol{l}}$ & $\mu_{redu}$ & $\mu_{allow}$ & $\mu_{shift}$ & $\Delta\hat{\boldsymbol{l}}$ \\ \hline
14           &      16\%          &      1.27         &    3.12          &   135009.89                  & 18\%         & 0.67          & 4.11           & 217287.89     &    19\%      &   0.42            &      4.14         &   235502.51                          \\
39                &   14\%           &     0.71          &      2.20         &  187968.87                & 17\%         & 0.39          & 2.18          & 424456.14      &      17\%        &      0.25         &      2.16         &    487856.14                          \\
118               &      13\%        &      0.49         &     1.48          &    1221542.34                        & 14\%         & 0.33          & 1.51          & 1416302.76     &    14\%          &    0.23          &     1.57          &      1616048.26                        \\ \hline
\end{tabular}
}
\end{table*}

Furthermore, solving 20-year planning problem can give more information to guide investment and construction of shiftable loads. Thus, we solve \eqref{LS: MODEL} and \eqref{LS:Metric_total} with daily resolution, i.e., $T=7,300$ for four size of systems using the profiles given in Fig. \ref{LS:Average_level}, and the results of carbon mission are given in Table \ref{LS: LTP_Metric}. It is expected the carbon emissions are positively correlated with system scales due to larger systems having more load consuming electricity. However, the situation regarding emission reduction is actually the opposite. The 14-bus system has the highest reduction of 18\%, while the 1888-bus system has the lowest reduction of 13\%. This could be attributed to a decrease in the percentage of shiftable load as the system scales increase or the small differences in carbon intensity. To illustrate this situation and analysis the effect of load shifting, we introduce the metrics reduction per allowed MW $\mu_{allow}$ and reduction per shifted MW $\mu_{shift}$ proposed in \cite{lindberg2021guide}, the definitions of them are given in \eqref{LS:Metric_single}:
\begin{align}\label{LS:Metric_single}
\mu_{allow} = \frac{\delta}{\mathbb{L}},\quad
\mu_{shift} = \frac{\delta}{\mathbb{S}};
\end{align}
where $\delta$ denotes the total emission reduction from minimizing the carbon emission without load shifting. $\mathbb{L} = \sum_{t \in \mathcal{T}}{\mathbf{\Delta}\overline{\mathbf{\boldsymbol{\ell}}}_t}$ denotes the maximum amount of shifted load, while $\mathbb{S} = \sum_{t \in \mathcal{T}}{|\mathbf{\Delta}{\mathbf{\boldsymbol{\ell}}}_t|}$ denotes the total amount of load shifted. The units of $\mu_{allow}$ and $\mu_{shift}$ are carbon tons per MW.
\begin{figure*}[th] 
    \hspace{-0.5cm}
	\centering
	\includegraphics[width=1.0\linewidth]{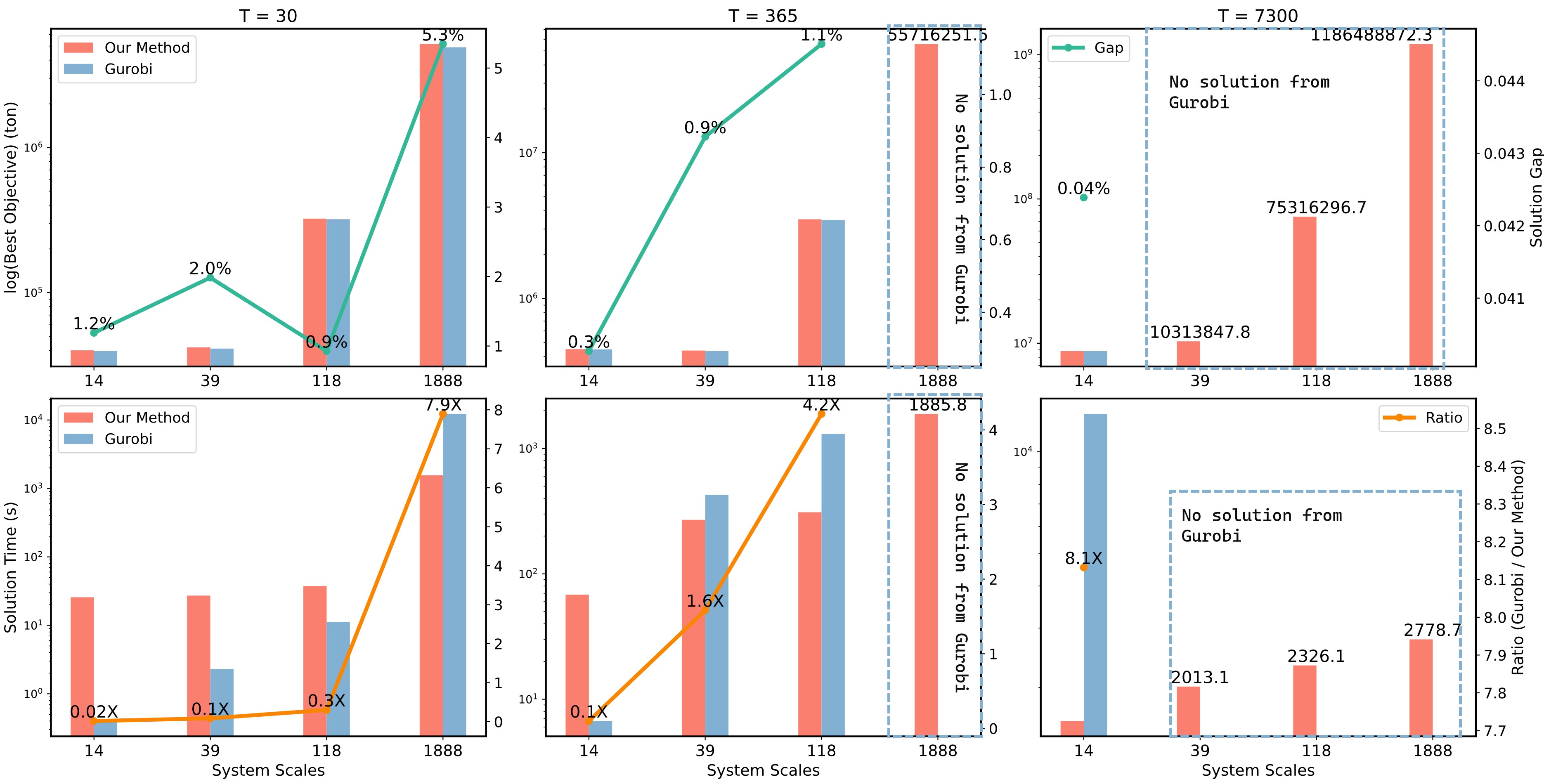}
	\caption{\footnotesize Performance of MCTS on different settings of scale and time horizon compared to Gurobi. Note Gurobi fails on $T=7,300$ cases when simulated on larger networks, while MCTS approach can always produces feasible plan.}
	\label{LS:MCTS_yearly}
\end{figure*}

Corresponding results for such metrics are presented in Table \ref{LS: Metric_per}. For all settings, the proposed planning model can still achieve more than 10\% carbon emission reduction. By changing the upper bound of shiftable load, we can alter the percentage of shifted load. Comparing the results of 14-bus and 39-bus systems, it can be observed that the variation in shifted load from $0.25\Delta\overline{\boldsymbol{l}}$ to $\Delta\overline{\boldsymbol{l}}$ is larger than the variation from $\Delta\overline{\boldsymbol{l}}$ to $1.75\Delta\overline{\boldsymbol{l}}$. Meanwhile, for 118-bus system, the variation is relatively steady. This may be because $\Delta\overline{\boldsymbol{l}}$ is large enough and the most effective carbon-efficient load of some buses have reached their limitations within the predefined upper bounds $\Delta\overline{\boldsymbol{l}}$ for the 14-bus and 39-bus systems. Thus, increasing the upper bound does not have a significant impact on the shifted load of these buses. Moreover, load shifting among other buses does not contribute significantly to carbon emissions. This leads to a surplus of shiftable load, resulting in slow or even decreased increases in $\mu_{allow}$ and $\mu_{shift}$. The gap between the $\mu_{allow}$ and $\mu_{shift}$ indicates that there is room for further improvement in utilizing the capacity of shiftable load. In practice, operators can allocate a larger size of shifted controllable load to carbon-efficient buses based on prior knowledge.



\subsection{Evaluation of MCTS-based Shiftable Load Location}
In the above simulations, it is observed that the effectiveness of Gurobi on a specific scenario largely relies on the suitability of its heuristics for that scenario. For example, in 39-bus system scenarios, Gurobi can find the optimal solution within 6 hours when $T=8760$. However, for a smaller problem scale with $T=7300$, Gurobi fails to find an available solution even after more than 10 hours and thus we turn to use the proposed MCTS-based method to find the solution.

To verify the performance of the MCTS-based Location method, we firstly compare the solutions given by the proposed method and Gurobi for four scale systems on three time step settings, the results are given in Fig. \ref{LS:MCTS_yearly}.
When $T=30$, Gurobi outperforms the proposed method in solving 14-bus, 39-bus, and 118-bus scenarios faster, with at least 2\% smaller carbon emission. For the 1888-bus scenario, Gurobi still finds the optimal solution and, with a time limit of under 2,000s, the proposed method finds a solution with a 5.3\% gap. With $T=365$, Gurobi solves the 14-bus, 39-bus, and 118-bus systems but fails for the 1888-bus system. The solution gaps for the 14-bus and 39-bus systems are 0.3\% and 0.9\% respectively, while it is 1.1\% for the 118-bus system. Despite being slower than Gurobi for the 14-bus cases, our method achieves 1.6X and 4.2X speed increases for the 39-bus and 118-bus systems respectively. For the 1888-bus system, our method finds a reasonable solution within 2000s. When $T=7,300$, Gurobi is able to find the optimal solution for the 14-bus system, but it fails to find feasible solutions for the 39-bus, 118-bus, and 1888-bus systems within 10 hours. In contrast, our method manages to find reasonable solutions for all cases within 3,000 seconds. Notably, for the 14-bus systems, the solution gap is only 0.04\%. These results verify that compared to Gurobi, our method delivers superior solution speed and feasibility for large scale MIP problems.
\begin{figure*}[t] 
   \vspace{-1.5em}
    \hspace{-0.6cm}
	\centering
	\includegraphics[width=1.03\linewidth]{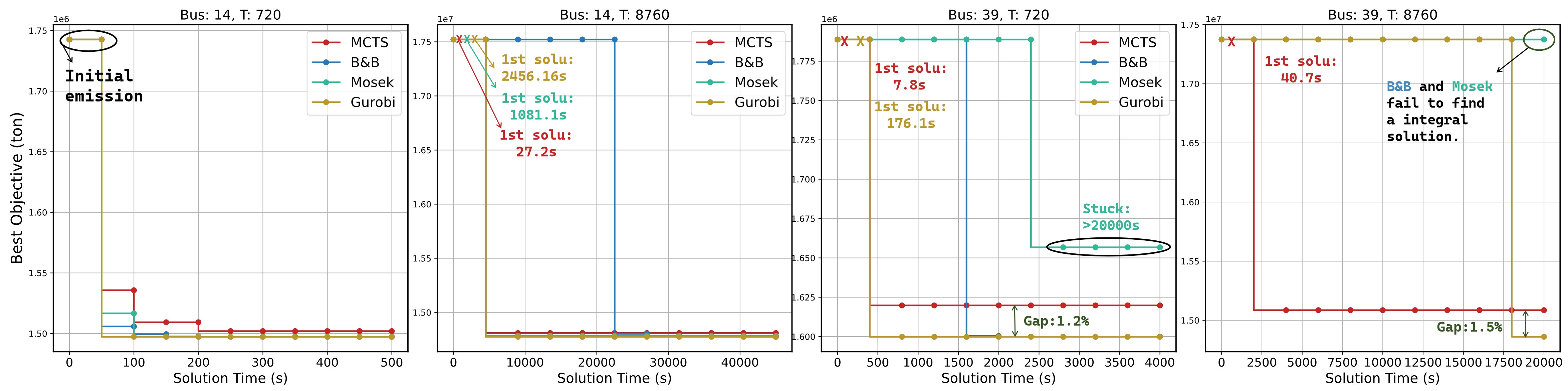}
	\caption{\footnotesize Flexibility analysis of solution time window.}
	\label{LS:MCTS_Flexibility}
\end{figure*}

Furthermore, we compare the solution process of our method with that of basic B\&B, Mosek, and Gurobi, in order to examine the configuration of solution time windows. The results are illustrated in Fig. \ref{LS:MCTS_Flexibility}, for 14-bus system with $T=720$, all solvers can find reasonable solution within 50s, and Gurobi achieve the least carbon emission. For 14-bus system with $T=8,760$, except the basic B\&B method, the other solvers can find reasonbale solutions within 5000s. Specifically, the time taken to find the first integral solution by our method, Mosek, and Gurobi are 27.2s, 1,081.1s, and 2,456.2s, respectively. Gurobi maintains the lowest level of carbon emissions, a benchmark the other solvers can closely match. Notably, the basic B\&B method can find an identical solution, but it requires a significantly longer time window of 41,400.1s.For a 39-bus system with T=720, both our method and Gurobi can find reasonable solutions within 500s. Our method needs 7.8s to find the first integral solution, while Gurobi takes 176.1s. The solution gap between our method and Gurobi is about 1.2\%. In contrast, the basic B\&B method takes more than 1600s to find an integral solution and over 3,500s to find the optimal solution. Mosek takes 2,591.3s to find its first integral solution, which is subpar, and then needs over 20,000s to find the next integral solution. For a 39-bus system with T=8,760, our method can find a reasonable solution within 2,500s, and the gap is 1.5\% compared to the solution given by Gurobi, which takes more than 17500s. However, both the basic B\&B and Mosek methods are unable to find an integral solution within 10 hours. The differing results among our method and other solvers demonstrate the efficiency of our approach, which can halt after a short time to deliver a reasonable solution. This discrepancy may be due to the solution search process: MCTS continues to evaluate the integral solution while other solvers opt to relax some integral variables. This reinforces the viability of our method to stop at any time.
\begin{figure*}[t] 
	\centering
	\includegraphics[width=1.0\linewidth]{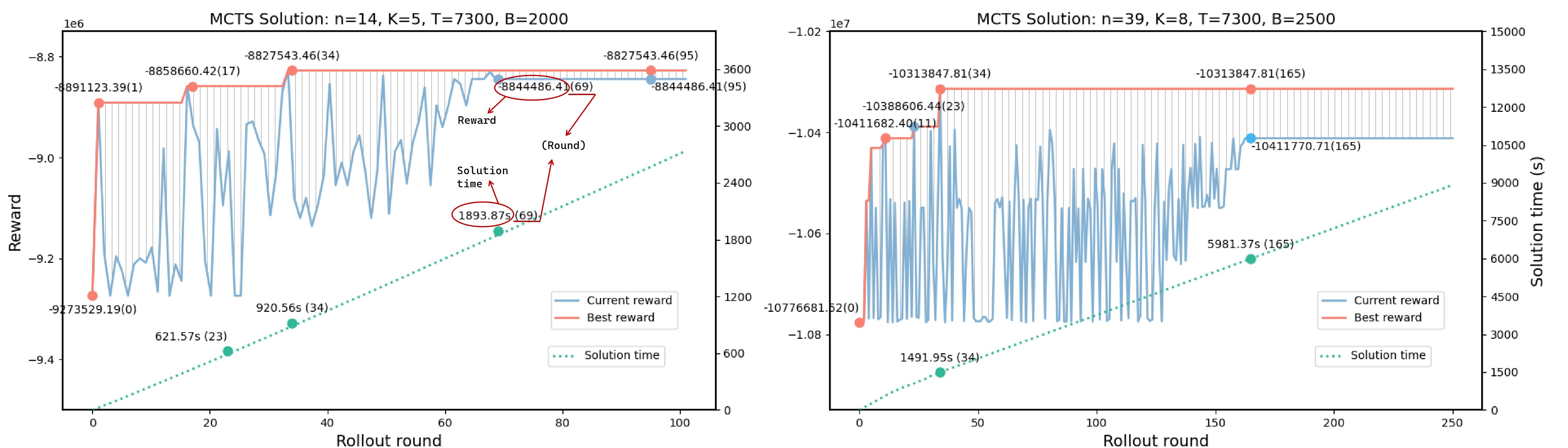}
 \caption{The specific variation in the solution search process of the MCTS method applied in 14-bus and 39-bus systems.}
	\label{LS:14_Convergence}
 \vspace{-1em}
\end{figure*}

The specific variation in the solution search process of MCTS method is presented in Fig. \ref{LS:14_Convergence}. It can be found that the current reward, i.e., the value of objective function keep changing in the search process, since in each round MCTS will evaluate a random solution by conducting the simulation mentioned in Fig. \ref{LS:MCTS}. Specifically, for the 14-bus system, the optimal reward — considered as the reasonable solution — is found at round 34, with the search process converging at round 69. For the 39-bus system, the reasonable solution is updated at round 34, and the process converges at round 165. Therefore, by setting time windows longer than 920.6s and 1491.9s for the 14-bus and 39-bus systems respectively, the MCTS method can already obtain a reasonable solution. Note that since each simulation round solves the same scale and number of QP models, it takes similar time to update the reward. Therefore, the solution time is nearly proportional to the number of rounds. Moreover, in Table \ref{LS: Description}, it can be seen that there is no clear correlation between the value of $K$ and the solution time when $K$ reaches a certain threshold. This is likely because as $K$ increases, more infeasible combinations of locations are included, but they are not evaluated, leading to negligible impact on the solution time.

\begin{table}[]
\hspace{0.1cm}
\centering
\caption{Convergence results for 14-bus system with different K.} \label{LS: Description}
\setlength{\tabcolsep}{2.8mm}{
\begin{tabular}{ccccc}
\hline
K                     & Convergent Reward & Best Reward                    & Round & Time                        \\ \hline
2                     & -9166178.76       & -9028317.63                    & 30    & 723.16                      \\
3                     & -9054640.63       & -8896583.28                    & 37    & 900.03                     \\
4                     & -8928860.53       & -8870661.47                    & 50    & 1229.64                     \\
5                     & -8844486.41       & -8827543.46                    & 69    & 1893.87                     \\
6                     & -8893019.48       & -8891123.38                    & 50    & \multicolumn{1}{l}{1218.68} \\
\multicolumn{1}{l}{7} & -8830158.51       & \multicolumn{1}{l}{-8819644.28}           & 58    & \multicolumn{1}{l}{1396.14} \\
\multicolumn{1}{l}{8} & -8830158.51       & \multicolumn{1}{l}{-8823788.95} & 65    & \multicolumn{1}{l}{1604.93} \\
\multicolumn{1}{l}{9} & -8830158.51       & \multicolumn{1}{l}{-8815393.02} & 70    & \multicolumn{1}{l}{1804.93} \\ \hline
\end{tabular}
}
\end{table}

\begin{table}[]
\hspace{0.1cm}
\centering
\caption{Chosen Buses in OPtimal Path for 14-bus system (K=5).} \label{LS: Description_path}
\setlength{\tabcolsep}{4.3mm}{
\begin{tabular}{cccc}
\hline
Round & Path        & Reward       & Time (s)   \\ \hline
15    & [0]          & -8981614.735 & 388.4  \\
53    & [0, 4]       & -8969392.451 & 1340.1 \\
59    & [0, 4, 8]      & -8827543.464 & 1440.6 \\
63    & [0, 4, 8, 9] & -8844486.411 & 1541.2 \\
69    & [0, 4, 8, 9, 11] & -8844486.411 & 1893.8 \\ \hline
\end{tabular}
}
\end{table}

Table \ref{LS: Description_path} illustrates the step-by-step determination of the converged solution for 14-bus system with $K=5$. 
At round 15, bus 1 has the highest UCB and is selected for expansion in the search path. At round 53, bus 1 maintains the highest UCB in the first layer. Among all the children of bus 1, bus 5 has the highest UCB in the second layer, resulting in a path change to [0,4]. Bus 5 is then expanded. By the same reasoning, buses 1, 5, 9, 10, and 11 have the highest UCB in their respective layers. With a relatively small exploration rate, the increase in visited times has a minimal impact on the UCB. As a result, the path converges to [0,4,8,9,11] and the search space are reduced compared to explore the whole tree.






\section{Conclusion}
In this paper, we develop a novel planning framework for expanding and operating geographically shiftable loads. The proposed framework can reduce system-level carbon emissions by over 10\% and limit the rise in generation costs to less than 5\%. With a longer planning horizon, the determined locations are likely to result in better long-term decarbonization effects. To solve the long-term and large-scale planning model computationally, we propose to utilize an adapted MCTS algorithm to customize the solution procedure of the resulting mixed-integer problem. In all investigated cases with different time windows, our MCTS-based approach can find the promising search region containing reasonable solutions of shiftable load locations, which can reduce carbon emission sufficiently. In future work, we will also consider robust operation and corresponding market design for geographically shiftable resources.

\bibliographystyle{IEEEtran}
\bibliography{bib}

\begin{thebibliography}{10}
\providecommand{\url}[1]{#1}
\csname url@samestyle\endcsname
\providecommand{\newblock}{\relax}
\providecommand{\bibinfo}[2]{#2}
\providecommand{\BIBentrySTDinterwordspacing}{\spaceskip=0pt\relax}
\providecommand{\BIBentryALTinterwordstretchfactor}{4}
\providecommand{\BIBentryALTinterwordspacing}{\spaceskip=\fontdimen2\font plus
\BIBentryALTinterwordstretchfactor\fontdimen3\font minus
  \fontdimen4\font\relax}
\providecommand{\BIBforeignlanguage}[2]{{%
\expandafter\ifx\csname l@#1\endcsname\relax
\typeout{** WARNING: IEEEtran.bst: No hyphenation pattern has been}%
\typeout{** loaded for the language `#1'. Using the pattern for}%
\typeout{** the default language instead.}%
\else
\language=\csname l@#1\endcsname
\fi
#2}}
\providecommand{\BIBdecl}{\relax}
\BIBdecl

\bibitem{liu2023monitoring}
Z.~Liu, Z.~Deng, S.~Davis, and P.~Ciais, ``Monitoring global carbon emissions
  in 2022,'' \emph{Nature Reviews Earth \& Environment}, vol.~4, no.~4, pp.
  205--206, 2023.

\bibitem{masanet2020recalibrating}
E.~Masanet, A.~Shehabi, N.~Lei, S.~Smith, and J.~Koomey, ``Recalibrating global
  data center energy-use estimates,'' \emph{Science}, vol. 367, no. 6481, pp.
  984--986, 2020.

\bibitem{Google}
A.~Radovanovic, ``Our data centers now work harder when the sun shines and wind
  blows,'' 2020, {GoogleOfficialBlog}, \url{https://bit.ly/3cXPAMu}.

\bibitem{zheng2020mitigating}
J.~Zheng, A.~A. Chien, and S.~Suh, ``Mitigating curtailment and carbon
  emissions through load migration between data centers,'' \emph{Joule},
  vol.~4, no.~10, pp. 2208--2222, 2020.

\bibitem{radovanovic2022carbon}
A.~Radovanovi{\'c}, R.~Koningstein, I.~Schneider, B.~Chen, A.~Duarte, B.~Roy,
  D.~Xiao, M.~Haridasan, P.~Hung, N.~Care \emph{et~al.}, ``Carbon-aware
  computing for datacenters,'' \emph{IEEE Transactions on Power Systems},
  vol.~38, no.~2, pp. 1270--1280, 2022.

\bibitem{liu2011greening}
Z.~Liu, M.~Lin, A.~Wierman, S.~H. Low, and L.~L. Andrew, ``Greening
  geographical load balancing,'' \emph{ACM SIGMETRICS Performance Evaluation
  Review}, vol.~39, no.~1, pp. 193--204, 2011.

\bibitem{cheng2022carbon}
K.-W. Cheng, Y.~Bian, Y.~Shi, and Y.~Chen, ``Carbon-aware ev charging,'' in
  \emph{2022 IEEE International Conference on Communications, Control, and
  Computing Technologies for Smart Grids (SmartGridComm)}.\hskip 1em plus 0.5em
  minus 0.4em\relax IEEE, 2022, pp. 186--192.

\bibitem{10.1145/3600100.3626346}
\BIBentryALTinterwordspacing
P.~Bovornkeeratiroj, N.~Bashir, V.~Deulkar, B.~Balaji, P.~Shenoy, D.~Irwin, and
  M.~Hajiesmaili, ``Quantifying the decarbonization potential of flexible
  load,'' in \emph{Proceedings of the 10th ACM International Conference on
  Systems for Energy-Efficient Buildings, Cities, and Transportation}, ser.
  BuildSys '23.\hskip 1em plus 0.5em minus 0.4em\relax New York, NY, USA:
  Association for Computing Machinery, 2023, p. 429–433. [Online]. Available:
  \url{https://doi.org/10.1145/3600100.3626346}
\BIBentrySTDinterwordspacing

\bibitem{gonzato2021long}
S.~Gonzato, K.~Bruninx, and E.~Delarue, ``Long term storage in generation
  expansion planning models with a reduced temporal scope,'' \emph{Applied
  Energy}, vol. 298, p. 117168, 2021.

\bibitem{alizadeh2015dynamic}
B.~Alizadeh and S.~Jadid, ``A dynamic model for coordination of generation and
  transmission expansion planning in power systems,'' \emph{International
  Journal of Electrical Power \& Energy Systems}, vol.~65, pp. 408--418, 2015.

\bibitem{jiao2023energy}
F.~Jiao, Y.~Zou, Y.~Zhou, Y.~Zhang, and X.~Zhang, ``Energy management for
  regional microgrids considering energy transmission of electric vehicles
  between microgrids,'' \emph{Energy}, p. 128410, 2023.

\bibitem{sun2023effective}
J.~Sun, Y.~Zhang, and C.~Wu, ``Effective risk-limiting carbon emission aware
  economic dispatch: An algorithmic perspective,'' in \emph{Proceedings of the
  14th ACM International Conference on Future Energy Systems}, 2023, pp.
  84--98.

\bibitem{he2021utility}
G.~He, J.~Michalek, S.~Kar, Q.~Chen, D.~Zhang, and J.~F. Whitacre,
  ``Utility-scale portable energy storage systems,'' \emph{Joule}, vol.~5,
  no.~2, pp. 379--392, 2021.

\bibitem{fan2020integrated}
V.~H. Fan, Z.~Dong, and K.~Meng, ``Integrated distribution expansion planning
  considering stochastic renewable energy resources and electric vehicles,''
  \emph{Applied Energy}, vol. 278, p. 115720, 2020.

\bibitem{6397576}
M.~Rahmani, R.~Romero, and M.~J. Rider, ``Strategies to reduce the number of
  variables and the combinatorial search space of the multistage transmission
  expansion planning problem,'' \emph{IEEE Transactions on Power Systems},
  vol.~28, no.~3, pp. 2164--2173, 2013.

\bibitem{garcia2022priority}
{\'A}.~Garc{\'\i}a-Cerezo, R.~Garc{\'\i}a-Bertrand, and L.~Baringo, ``Priority
  chronological time-period clustering for generation and transmission
  expansion planning problems with long-term dynamics,'' \emph{IEEE
  Transactions on Power Systems}, vol.~37, no.~6, pp. 4325--4339, 2022.

\bibitem{cheng2018bi}
Y.~Cheng, N.~Zhang, and C.~Kang, ``Bi-level expansion planning of multiple
  energy systems under carbon emission constraints,'' in \emph{2018 IEEE Power
  \& Energy Society General Meeting (PESGM)}.\hskip 1em plus 0.5em minus
  0.4em\relax IEEE, 2018, pp. 1--5.

\bibitem{akbari2019optimal}
A.~Akbari-Dibavar, S.~Nojavan, and K.~Zare, ``Optimal sitting and sizing of
  energy storage systems in a smart distribution network considering network
  constraints and demand response program,'' \emph{Journal of Energy Management
  and Technology}, vol.~3, no.~2, pp. 14--25, 2019.

\bibitem{home2019optimal}
J.~M. Home-Ortiz, M.~Pourakbari-Kasmaei, M.~Lehtonen, and J.~R.~S. Mantovani,
  ``Optimal location-allocation of storage devices and renewable-based dg in
  distribution systems,'' \emph{Electric Power Systems Research}, vol. 172, pp.
  11--21, 2019.

\bibitem{lindberg2021guide}
J.~Lindberg, Y.~Abdennadher, J.~Chen, B.~C. Lesieutre, and L.~Roald, ``A guide
  to reducing carbon emissions through data center geographical load
  shifting,'' in \emph{Proceedings of the Twelfth ACM International Conference
  on Future Energy Systems}, 2021, pp. 430--436.

\bibitem{wang2023carbon}
F.~Wang, C.~Lv, and J.~Xu, ``Carbon awareness oriented data center location and
  configuration: An integrated optimization method,'' \emph{Energy}, vol. 278,
  p. 127744, 2023.

\bibitem{abdennadher2022carbon}
Y.~Abdennadher, J.~Lindberg, B.~C. Lesieutre, and L.~Roald, ``Carbon efficient
  placement of data center locations,'' in \emph{2022 North American Power
  Symposium (NAPS)}.\hskip 1em plus 0.5em minus 0.4em\relax IEEE, 2022, pp.
  1--6.

\bibitem{lindberg2022using}
J.~Lindberg, B.~C. Lesieutre, and L.~A. Roald, ``Using geographic load shifting
  to reduce carbon emissions,'' \emph{Electric Power Systems Research}, vol.
  212, p. 108586, 2022.

\bibitem{clausen1999branch}
J.~Clausen, ``Branch and bound algorithms-principles and examples,''
  \emph{Department of Computer Science, University of Copenhagen}, pp. 1--30,
  1999.

\bibitem{LTP}
``Incorporating climate change in long-term planning models: takeaways from
  current industry and academic practices,'' {PSERC
  Webinars},\url{https://documents.pserc.wisc.edu/documents/general_information/presentations/pserc_seminars/webinars_2023/fall_2023/Webinar_Slides_Dyreson_Roald_11_1_23.pdf}.

\bibitem{brenner2023learning}
A.~Brenner, R.~Khorramfar, and S.~Amin, ``Learning spatio-temporal aggregations
  for large-scale capacity expansion problems,'' in \emph{Proceedings of the
  ACM/IEEE 14th International Conference on Cyber-Physical Systems (with
  CPS-IoT Week 2023)}, 2023, pp. 68--77.

\bibitem{xavier2021learning}
{\'A}.~S. Xavier, F.~Qiu, and S.~Ahmed, ``Learning to solve large-scale
  security-constrained unit commitment problems,'' \emph{INFORMS Journal on
  Computing}, vol.~33, no.~2, pp. 739--756, 2021.

\bibitem{9964136}
L.~Sang, Y.~Xu, and H.~Sun, ``Ensemble provably robust learn-to-optimize
  approach for security-constrained unit commitment,'' \emph{IEEE Transactions
  on Power Systems}, vol.~38, no.~6, pp. 5073--5087, 2023.

\bibitem{nair2020solving}
V.~Nair, S.~Bartunov, F.~Gimeno, I.~Von~Glehn, P.~Lichocki, I.~Lobov,
  B.~O'Donoghue, N.~Sonnerat, C.~Tjandraatmadja, P.~Wang \emph{et~al.},
  ``Solving mixed integer programs using neural networks,'' \emph{arXiv
  preprint arXiv:2012.13349}, 2020.

\bibitem{sun2022symbolic}
F.~Sun, Y.~Liu, J.-X. Wang, and H.~Sun, ``Symbolic physics learner: Discovering
  governing equations via monte carlo tree search,'' \emph{arXiv preprint
  arXiv:2205.13134}, 2022.

\bibitem{abe2022anytime}
K.~Abe, J.~Komiyama, and A.~Iwasaki, ``Anytime capacity expansion in medical
  residency match by monte carlo tree search,'' \emph{arXiv preprint
  arXiv:2202.06570}, 2022.

\bibitem{Khalil_2022}
\BIBentryALTinterwordspacing
E.~B. Khalil, P.~Vaezipoor, and B.~Dilkina, ``Finding backdoors to integer
  programs: A monte carlo tree search framework,'' \emph{Proceedings of the
  {AAAI} Conference on Artificial Intelligence}, vol.~36, no.~4, pp.
  3786--3795, jun 2022. [Online]. Available:
  \url{https://doi.org/10.1609%2Faaai.v36i4.20293}
\BIBentrySTDinterwordspacing

\bibitem{wang2020learning}
L.~Wang, R.~Fonseca, and Y.~Tian, ``Learning search space partition for
  black-box optimization using monte carlo tree search,'' \emph{Advances in
  Neural Information Processing Systems}, vol.~33, pp. 19\,511--19\,522, 2020.

\bibitem{RTE}
``Download data published by rte,'' {RTE Services
  Portal},\url{https://www.services-rte.com/en/download-data-published-by-rte.html?category=consumption&type=short_term}.

\bibitem{DCLoad}
Z.~et~al, ``Mitigating curtailment and carbon emissions through data center
  load migration,'' {V1, doi: 10.17632/wngs282m48.1}.

\end{thebibliography}

\end{document}